\newtheorem{corollary}{Corollary}
\newtheorem{proposition}{Proposition}
\newtheorem{remark}{Remark}
\def\cS{\mbox{$\mathcal{S}$}}
\def\argmax{\mathop{\rm \arg\!\max}}
\newcommand{\thickhline}{%
    \noalign {\ifnum 0=`}\fi \hrule height 1pt
    \futurelet \reserved@a \@xhline
}
\newcolumntype{"}{@{\hskip\tabcolsep\vrule width 1pt\hskip\tabcolsep}}
\title{
User Scheduling for Millimeter Wave MIMO Communications with Low-Resolution ADCs}
\author{
Jinseok Choi and Brian L. Evans \\
Wireless Networking and Communication Group, The University of Texas at Austin\\
E-mail: jinseokchoi89@utexas.edu, bevans@ece.utexas.edu
}
\begin{document}
%
\maketitle

\begin{abstract}
	To reduce power consumption in the receiver, low-resolution analog-to-digital converters (ADCs) can be a potential solution for millimeter wave (mmWave) systems in which many antennas are likely to be deployed to compensate for the large path loss. 
	In this paper, we investigate uplink user scheduling when a basestation employs low-resolution ADCs with a large number of antennas. 
	Due to quantization error, we show that the channel structure in the beamspace, in addition to the channel magnitude and beamspace orthogonality, plays a key role in maximizing the achievable rates of scheduled users. 
	{\color{black}With the constraint of the equal channel norm, we derive the scheduling criteria that maximize the uplink sum rate for multi-user multiple input multiple output (MIMO) systems with a zero-forcing receiver.}
	Leveraging the derived criteria, we propose an efficient scheduling algorithm for mmWave systems with low-resolution ADCs. 
	Numerical results validate that the proposed algorithm outperforms conventional user scheduling methods in terms of the sum rate.
\end{abstract}

\begin{IEEEkeywords}
Millimeter wave, low-resolution ADCs, user scheduling, channel structure, beamspace.
\end{IEEEkeywords}

\section{Introduction}
\label{sec:intro}


Millimeter wave communication has drawn extensive attention as a promising technology for 5G cellular systems \cite{pi2011introduction,andrews2014will,boccardi2014five}, and evinced its feasibility \cite{rappaport2013millimeter}.
The advantages of remarkably wide bandwidth have encouraged wireless researchers to perform comprehensive studies to resolve practical challenges in the realization of mmWave communications \cite{niu2015survey, heath2016overview}. 
Due to a large signal bandwidth and a high number of bits/sample in mmWave communications, high-resolution ADCs coupled with large antenna arrays demand significant power consumption in the receiver.
Consequently, receiver architectures with low-resolution ADCs \cite{fan2015uplink} have been of interest in recent years.

In mmWave systems, several channel estimation techniques have been proposed for low-resolution ADCs, showing the feasibility of employing low-resolution ADCs by taking the advantage of the large antenna arrays \cite{rusu2015low, mo2014channel, mo2016channel}.
In \cite{rusu2015low}, an adaptive compressed sensing strategy was proposed to recover the sparse mmWave channel and reduced the training overhead when the channel is sparse.
A generalized approximate message-passing algorithm with 1-bit ADCs in \cite{mo2014channel} showed a similar channel estimation performance as maximum-likelihood estimator with full-resolution ADCs in the low and medium SNR regimes by exploiting the sparsity of mmWave channels in the beamspace.
Receiver architectures with resolution-adaptive ADCs were investigated for mmWave systems in \cite{choi2017adc,choi2017resolution}, showing 1-bit improvement in the uplink sum rate in the low-resolution regime.

In another line of research in mmWave communications, user scheduling was investigated in \cite{lee2016randomly} by considering a random beamforming (RBF) method \cite{sharif2005capacity} for uniform random single path channels.
Then, the analysis of user scheduling in mmWave channels was extended to the uniform random multi-path case, and scheduling algorithms were developed based on a beam selection approach \cite{lee2016performance}.
Although the algorithms in \cite{lee2016performance} were proposed for mmWave systems and outperformed the RBF algorithm in \cite{sharif2005capacity}, the algorithms focused on user scheduling without quantization errors.
Accordingly, user scheduling in mmWave systems with low-resolution ADCs is still questionable.

\begin{figure}[!t]\centering
\includegraphics[scale = 0.44]{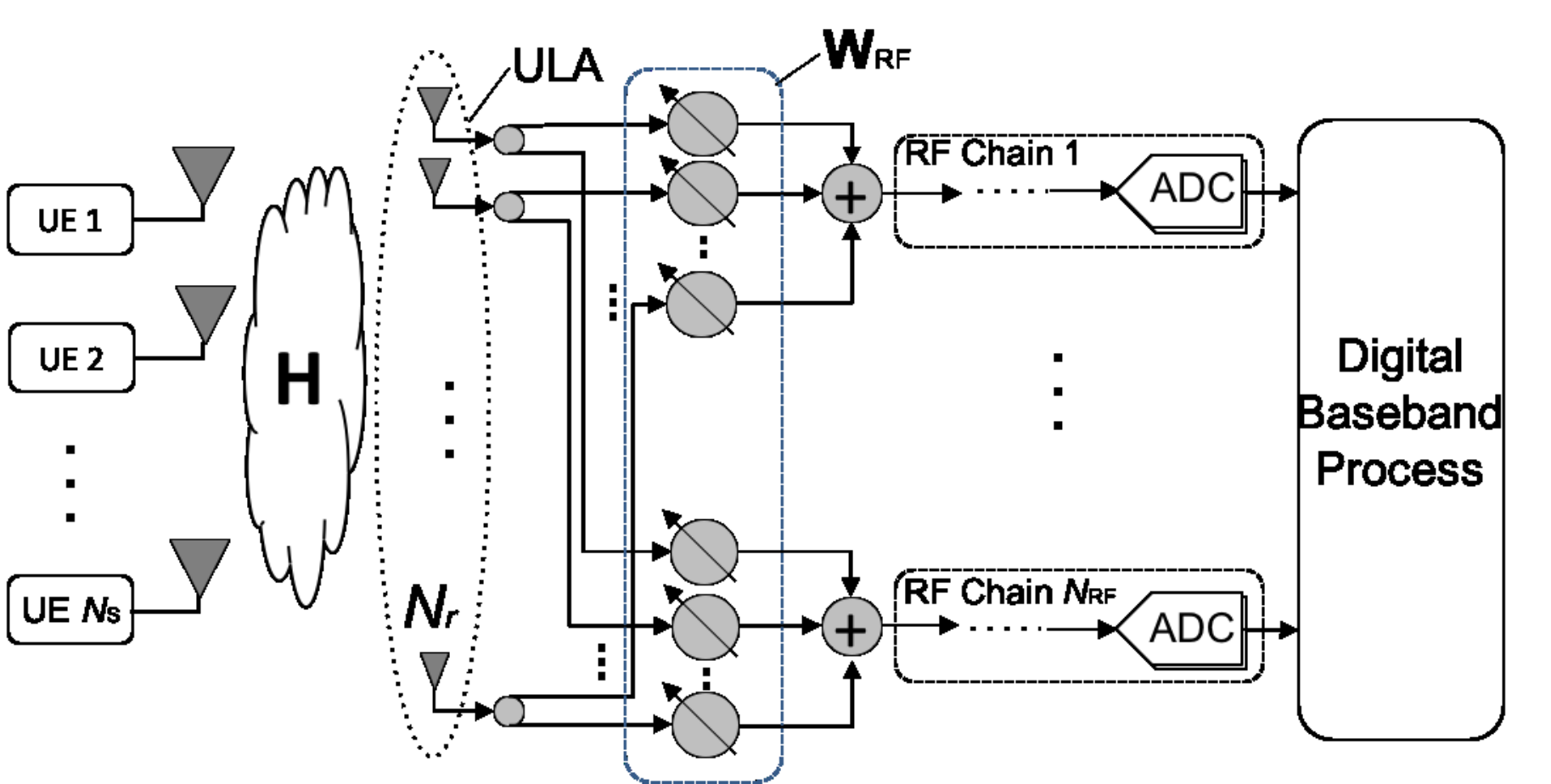}
\caption{A receiver with the large antenna arrays and analog combiner ${\bf W}_{\rm RF}$, followed by low-resolution ADCs in the uplink multi-user system.} 
\label{fig:system}
\end{figure}

In this paper, we investigate uplink user scheduling when a basestation (BS) employs low-resolution ADCs with a large number of antennas in mmWave systems.
Due to the coarse quantization, quantization errors need to be considered in scheduling, which changes the scheduling problem from the conventional problem. 
We derive structural criteria of channels in the beamspace for user scheduling to maximize the uplink sum rate in mmWave MIMO systems with low-resolution ADCs. 
Considering a zero-forcing receiver for multi-user MIMO communications, the derived structural criteria are $(i)$ to have as many channel propagation paths as possible with unique angle of arrivals (AoAs) for each scheduled user and $(ii)$ to have channel power that is evenly spread across beamspace complex gains within each scheduled user channel. 
Leveraging the criteria and mmWave channel sparsity in the beamspace, we propose a user scheduling algorithm with low complexity.
Numerical results demonstrate that the proposed algorithm outperforms previous user scheduling algorithms in terms of the sum rate.

{\color{black} The remainder of the paper is organized as follows.
Section \ref{sec:sys_model} explains the system model.
In Section \ref{sec:user_scheduling}, the user scheduling criteria for maximizing the sum rate with the constraint of the equal channel norm are derived.
In Section \ref{sec:algorithm}, the user scheduling algorithm is proposed by utilizing the derived criteria.
In Section \ref{sec:simulation}, the proposed algorithm is evaluated in sum rate and Section \ref{sec:conclusion} concludes the paper.
}

\section{System Model}
\label{sec:sys_model}

\subsection{Signal and Channel Models}

We consider a single-cell MIMO uplink network where a BS employs uniform linear array (ULA) and analog combiners with $N_r$ antennas as shown in Fig. \ref{fig:system}.
We assume that $N_{u}$ users with a single antenna exist in the cell, and the BS selects $N_s$ users to serve.
$N_{\rm RF}$ RF chains are connected to $N_{\rm RF}$ pairs of ADCs.
The received baseband analog signal ${\bf r} \in \mathbb{C}^{N_r}$ with a narrowband channel assumption is given as
\begin{align}
	\label{eq:r}
	{\bf r} = \sqrt{\rho}{\bf Hs} + {\bf n}
\end{align} 
where $\rho$, ${\bf H}$, ${\bf s}$, and ${\bf n}$ represent the transmit power,  the $N_r \times N_s$ channel matrix, the symbol vector of $N_s$ users, and the additive white Gaussian noise vector, respectively. 
We assume ${\bf n}  \sim \mathcal{CN}(\mathbf{0}, \mathbf{I})$ and Gaussian signaling $\bf s \sim \mathcal{CN}(\mathbf{0}, \mathbf{I})$ where $\mathbf{0}$ is a zero vector and ${\bf I}$ is the identity matrix with proper dimension. 
We consider that the channel $\bf H$ is known at the BS
{\color{black} as previously developed channel estimation techniques \cite{rusu2015low, mo2014channel, mo2016channel,sung2018narrowband} show negligible degradation of the estimation  accuracy compared to the infinite-bit ADC case with few-bit ADCs.}
We assume $N_{\rm RF} = N_r$ as the BS employs low-resolution ADCs.
The BS adopts an $N_r \times N_r$ analog combiner matrix ${\bf W}_{\rm RF}$.
The elements of ${\bf W}_{\rm RF}$ are constrained to have the equal norm of $1/\sqrt{N_r}$; i.e., $[{\bf w}_{{\rm RF},i}{\bf w}^H_{{\rm RF},i}]_{\ell,\ell} = 1/N_r$ where $[\cdot]_{i,i}$ represents the $(\ell,\ell)$th element of $[\cdot]$ and $(\cdot)^H$ denotes the conjugate transpose.
{\color{black} Later, this analog combiner is used to exploit the sparsity of the mmWave channels in user scheduling.}
With the analog beamforming, the received signal \eqref{eq:r} becomes
\begin{align} 
	\label{eq:y}
	{\bf {y}}  = {\bf W}_{\rm RF}^H {\bf r} = \sqrt{\rho}{\bf  W}_{\rm RF}^H{\bf Hs} +  {\bf W}_{\rm RF}^H {\bf n}.
\end{align}

We consider mmWave channels with limited $L$ propagation paths \cite{el2014spatially}.
Then, the $k${th} user channel can be modeled as
\begin{align}
	\label{eq:channel_geo}
	{\bf h}_k = \sqrt{\frac{N_r}{L}}\sum_{\ell = 1}^{L}\omega_{k,\ell} {\bf a}(\theta_{k,\ell})
\end{align}
where $\omega_{k,{\ell}}$ is the complex gain of the $\ell${th} path of the user $k$, and ${\bf a}(\theta_{k,{\ell}})$ is the BS antenna array steering vector corresponding to the azimuth AoA of the $\ell$th path of the user $k$ with $\theta_{k,{\ell}} \in [-\pi/2,\pi/2]$. 
We consider that $\omega_{k,{\ell}}$ is an independent and identically distributed (IID) complex Gaussian random variable as $\omega_{k,{\ell}} \sim \mathcal{CN}(0, 1)$.
The array steering vector ${\bf a}(\theta)$ for ULA is expressed as
$ {\bf a}(\theta) = \frac{1}{\sqrt{N_r}}\Big[1,e^{-j 2\pi\vartheta},e^{-j 4\pi \vartheta},\dots,e^{-j 2(N_r-1)\pi\vartheta}\Big]^\intercal 
$, 
where $\vartheta$ is the normalized spatial angle that is $\vartheta = \frac{d}{\lambda}\sin(\theta)$, $d$ denotes the distance between antenna elements, and $\lambda$ represents the signal wave length.
{\color{black} Assuming uniformly-spaced spatial angles, i.e., $\vartheta_i = \frac{d}{\lambda}\sin(\theta_i) = (i-1)/N_r$, the matrix of array steering vectors
$\mathbf{A}=\big[{\bf a}(\theta_1),\dots,{\bf a}(\theta_{N_r})\big]$
becomes a unitary discrete Fourier transform matrix. }
We adopt the matrix of array steering vectors as the analog combiner ${\bf W}_{\rm RF} = {\bf A}$, and we can rewrite the received signal \eqref{eq:y} as ${\bf y} = \sqrt{\rho} {\bf A}^H{\bf H}{\bf s} + {\bf A}^H{\bf n}$. 
We denote ${\bf H}_{\rm b} = {\bf A}^H{\bf H}$, which is the projection of the channel matrix onto the beamspace.

\subsection{Quantization Model}
\label{subsec:quantization}

After the analog beamforming, 
each real and imaginary component of the complex output $y_i$ is quantized at the $i$th pair of ADCs.
For analytical tractability, we adopt the additive quantization noise model (AQNM) to represent the quantization process in a linear form. 
The AQNM provides reasonable accuracy in low and medium SNR ranges \cite{orhan2015low}.
Accordingly, the quantized signal $\bf y_{\rm q}$ is given as
\begin{align} 
	\label{eq:yq}
	\mathbf{y}_{\rm q}&=\mathcal{ Q}(\mathbf{ y}) = \alpha \sqrt{\rho} {\bf H_{\rm b}}{\bf s} +\alpha {\pmb \eta} + {\bf q}
\end{align} 
where $\mathcal{Q}(\cdot)$ is the element-wise quantizer function for each real and imaginary part, $\alpha$ is the quantization gain which is defined as $\alpha = 1- \beta$ with $\beta = \frac{\mathbb{E}[|{y}_i - {y}_{{\rm q}i}|^2]}{\mathbb{E}[|{y}_i|^2]}$.
Under the assumptions of a scalar minimum mean squared error quantizer and Gaussian signaling, $\beta$ can be approximated as $\beta \approx \frac{\pi\sqrt{3}}{2} 2^{-2b}$ for $b > 5$, and Table 1 in \cite{choi2017resolution} lists the values of $\beta$ for $b \leq 5$.
Note that $b$ is the number of quantization bits for each real and imaginary part of $y_i$.
Since ${\bf A}$ is unitary, the noise  ${\pmb \eta} = {\bf A}^H {\bf n}$ is distributed as ${\pmb \eta} \sim \mathcal{CN}({\bf 0},{\bf I})$.
The quantization noise ${\bf q}$ is an additive noise that is uncorrelated with the quantization input $\bf y$ and follows ${\bf q} \sim \mathcal{CN}({\bf 0},{\bf R}_{\bf qq})$ where the covariance matrix $\mathbf{R}_{\bf qq}$ for a fixed channel realization $ {\bf H}_{\rm b}$ is $\mathbf{R}_{\bf qq}= \alpha\beta\,{\rm diag}(\rho{\bf H_{\rm b}}{\bf H}_{\rm b}^H + {\mathbf{I}})$.
Here, ${\rm diag}(\rho{\bf H_{\rm b}}{\bf H}_{\rm b}^H + \mathbf{I})$ represents the diagonal matrix of the diagonal entries of $\rho{\bf H_{\rm b}}{\bf H}_{\rm b}^H + {\bf I}$.


\section{User Scheduling}
\label{sec:user_scheduling}

In this paper, we focus on employing a zero-forcing combiner ${\bf W}_{\rm zf} = {\bf H}_{\rm b}({\bf H}_{\rm b}^H {\bf H}_{\rm b})^{-1}$ at the BS.
Then, we have
\begin{align}
	{\bf y}^{\rm zf}_{\rm q}
	& = {\bf W}_{\rm zf}^H {\bf y}_{\rm q} = \alpha \sqrt{\rho}{\bf W}_{\rm zf}^H {\bf H}_{\rm b} {\bf s} + \alpha {\bf W}_{\rm zf}^H {\pmb \eta} + {\bf W}_{\rm zf}^H {\bf q},
\end{align}
and the achievable rate for user $k$ is given as
\begin{align}
	\label{eq:rate}
	&\mathcal{R}_k({\bf H}_{\rm b})
	 = \log_2 \left(1+\frac{\alpha^2 \rho}{ {\bf w}_{{\rm zf},k}^H {\bf R}_{{\bf qq}}{\bf w}_{{\rm zf},k}+\alpha^2 \|{\bf w}_{{\rm zf},k}\|^2}\right).
\end{align}
Using the achievable rate in \eqref{eq:rate}, the user scheduling problem can be formulated as
\begin{align}
	\label{eq:problem}
	\mathcal{R}({\bf H}_{\rm b}(\mathcal{S^\star})) =\max_{\mathcal{S} \subset \{1,\dots,N_u\}:|\mathcal{S}| = N_s} \sum_{k \in \mathcal{S}} \mathcal{R}_k({\bf H}_{\rm b}(\mathcal{S}))
\end{align}
where ${\mathcal{S}}$ represents the set of scheduled users and  ${\bf H}_{\rm b}(\mathcal{S})$ is the beamspace channel matrix of the users in $\mathcal{S}$.
Later, we use $\cS(k)$ to indicate the $k$th scheduled user.

\begin{remark}
\label{rm:intuition}
To maximize \eqref{eq:rate}, the aggregated beamspace channel gain at each RF chain $\|[{\bf H}_{\rm b}]_{i,:}\|^2$ needs to be small to reduce the quantization error ${\bf R}_{\bf qq}$ where $[{\bf H}_{\rm b}]_{i,:}$ represents the $i$th row of ${\bf H}_{\rm b}$, in addition to the beamspace channel orthogonality $({\bf h}_{{\rm b},k}\! \perp\! {\bf h}_{{\rm b},k'},\ k \neq k')$ and large beamspace channel gains $\|{\bf h}_{{\rm b},k}\|^2$ for reducing $\|{\bf w}_{{\rm zf},k}\|^2$.
\end{remark}
%

To solve the user scheduling problem in \eqref{eq:problem}, we exploit the finding in Remark \ref{rm:intuition} and derive the structural criteria for channels in the beamspace that maximize the uplink sum rate.
For tractability, we focus on the case where the magnitude of each user channel is equivalent, i.e., $\|{\bf h}_{k}\| = \sqrt{\gamma},\ \forall k$ where $\gamma >0$.
To characterize a channel matrix that maximizes the uplink sum rate, we formulate a problem as follows:
\begin{align}
	\label{eq:opt_prob}
	\mathcal{R}({\bf H}_{\rm b}^\star) =\max_{{\bf H}_{\rm b}\in \mathbb{C}^{N_r \times N_s}}\sum_{k = 1}^{N_s} \mathcal{R}_k({\bf H}_{\rm b}), \quad \text{s.t. } \|{\bf h}_{{\rm b},k}\| = \sqrt{\gamma} \ \ \forall k.
\end{align}
Note that $\|{\bf h}_{{\rm b},k}\| = \sqrt{\gamma}$ in \eqref{eq:opt_prob} is equivalent to $\|{\bf h}_{k}\| = \sqrt{\gamma}$ since the analog combiner ${\bf W}_{\rm RF} = {\bf A}$ is a unitary matrix.


To provide geographical interpretation for the analysis, we adopt the virtual channel representation~\cite{sayeed2002deconstructing} in which each beamspace channel ${\bf h}_{{\rm b},k}$ contains $L$ non-zero complex gains.
The $L$ non-zero elements correspond to the complex gains of $L$ channel propagation paths.
We first analyze the single user selection, followed by the multi-user selection. 

\begin{proposition}
	\label{pr:rate1}
	For single user selection with $\|{\bf h}_{k}\| = \sqrt{\gamma},\ \forall k$, selecting a user who has the following channel characteristics maximizes the uplink achievable rate:
	\begin{enumerate}[(i)]
		\item The largest number of channel propagation paths. 
		\item Equal power spread across the beamspace complex gains.
	\end{enumerate}
\end{proposition}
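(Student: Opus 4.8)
The plan is to specialize the achievable rate \eqref{eq:rate} to the single-user zero-forcing case and thereby turn the maximization in \eqref{eq:opt_prob} into an elementary power-allocation problem over the squared beamspace gains. With $N_s=1$ the zero-forcing combiner collapses to $\mathbf{w}_{\rm zf} = \mathbf{h}_{\rm b}/\|\mathbf{h}_{\rm b}\|^2 = \mathbf{h}_{\rm b}/\gamma$, so $\|\mathbf{w}_{\rm zf}\|^2 = 1/\gamma$, and since $\mathbf{R}_{\bf qq} = \alpha\beta\,{\rm diag}(\rho\mathbf{h}_{\rm b}\mathbf{h}_{\rm b}^H+\mathbf{I})$ has diagonal entries $\alpha\beta(\rho|[\mathbf{h}_{\rm b}]_i|^2+1)$, I would compute
\[
\mathbf{w}_{\rm zf}^H\mathbf{R}_{\bf qq}\mathbf{w}_{\rm zf}
= \frac{\alpha\beta}{\gamma^2}\sum_{i=1}^{N_r}|[\mathbf{h}_{\rm b}]_i|^2\big(\rho|[\mathbf{h}_{\rm b}]_i|^2+1\big)
= \frac{\alpha\beta\rho}{\gamma^2}\sum_{i=1}^{N_r}|[\mathbf{h}_{\rm b}]_i|^4 + \frac{\alpha\beta}{\gamma},
\]
using $\sum_i|[\mathbf{h}_{\rm b}]_i|^2 = \|\mathbf{h}_{\rm b}\|^2 = \gamma$. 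Plugging this into the denominator of \eqref{eq:rate} and invoking $\alpha+\beta=1$ to merge $\frac{\alpha\beta}{\gamma}+\frac{\alpha^2}{\gamma}=\frac{\alpha}{\gamma}$ gives the compact form
\[
\mathcal{R}_k(\mathbf{h}_{\rm b})
= \log_2\!\left(1 + \frac{\alpha^2\rho}{\,\frac{\alpha\beta\rho}{\gamma^2}\sum_{i=1}^{N_r}|[\mathbf{h}_{\rm b}]_i|^4 + \frac{\alpha}{\gamma}\,}\right),
\]
in which the only channel-dependent quantity is $f(\mathbf{h}_{\rm b}) := \sum_{i=1}^{N_r}|[\mathbf{h}_{\rm b}]_i|^4$.

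The remaining steps are pure monotonicity arguments. Since $\mathcal{R}_k$ is strictly decreasing in $f$, problem \eqref{eq:opt_prob} is equivalent to minimizing $f$ over beamspace vectors obeying $\|\mathbf{h}_{\rm b}\| = \sqrt{\gamma}$ and the virtual-channel structure that $\mathbf{h}_{\rm b}$ has exactly $L$ nonzero entries. Setting $x_i = |[\mathbf{h}_{\rm b}]_i|^2 \ge 0$ with $\sum_i x_i = \gamma$ and support size $L$, Cauchy--Schwarz (equivalently the power-mean inequality) yields $\sum_i x_i^2 \ge \big(\sum_i x_i\big)^2/L = \gamma^2/L$, with equality if and only if every nonzero $x_i$ equals $\gamma/L$; this is precisely criterion (ii), equal power spread across the beamspace complex gains. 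Substituting the optimal value $f = \gamma^2/L$ back into the rate makes the denominator $\frac{\alpha\beta\rho}{L}+\frac{\alpha}{\gamma}$, which is strictly decreasing in $L$, so the rate is strictly increasing in $L$: one should pick the user with the largest number of paths, i.e.\ criterion (i). I would close by observing that the two criteria are compatible rather than competing, since for every fixed number of paths the uniform spread is optimal and a larger number of paths strictly raises that optimum, so a channel meeting both simultaneously attains the maximum in \eqref{eq:opt_prob}.

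I do not expect a genuine obstacle here; the delicate points are merely bookkeeping — correctly cancelling $\frac{\alpha\beta}{\gamma}+\frac{\alpha^2}{\gamma}=\frac{\alpha}{\gamma}$ through $\alpha+\beta=1$, and reading ``number of propagation paths'' as the size of the beamspace support, so that paths sharing a quantized AoA collapse to one nonzero gain. This is what makes criterion (i) meaningful and also anticipates the ``unique AoA'' refinement needed later in the multi-user analysis.
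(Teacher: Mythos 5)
Your proposal is correct and follows the same overall route as the paper: specialize the zero-forcing rate to $\mathbf{w}_{\rm zf}=\mathbf{h}_{\rm b}/\|\mathbf{h}_{\rm b}\|^2$, observe that under $\|\mathbf{h}_{\rm b}\|^2=\gamma$ the rate is a strictly decreasing function of $\sum_{i\in\mathcal{L}}|h_{{\rm b},i}|^4$, minimize that quantity under the norm constraint to get equal gains $\gamma/L$, and then note the resulting rate $\log_2\bigl(1+\alpha\rho/(\rho(1-\alpha)/L+1/\gamma)\bigr)$ is increasing in $L$; your algebra (including the $\tfrac{\alpha\beta}{\gamma}+\tfrac{\alpha^2}{\gamma}=\tfrac{\alpha}{\gamma}$ merge) reproduces the paper's \eqref{eq:rate1} exactly. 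The one genuine difference is the inner optimization: the paper sets up a Lagrangian and reads off the KKT stationary point $x_i=\gamma/L$, while you invoke Cauchy--Schwarz (power-mean), $\sum_{i\in\mathcal{L}}x_i^2\ge\bigl(\sum_{i\in\mathcal{L}}x_i\bigr)^2/L$ with equality iff all nonzero $x_i$ are equal. Your version is more elementary and, strictly speaking, slightly stronger as written: it certifies the global minimum and its uniqueness directly, whereas the paper's stationarity computation implicitly relies on convexity of the objective to conclude optimality. Your closing remark that ``number of paths'' must be read as the beamspace support size (paths sharing a quantized AoA collapsing to one nonzero gain) is consistent with the paper's use of the virtual channel representation and is a fair anticipation of the unique-AoA condition in the multi-user proposition.
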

\begin{proof}
	For a single user, the zero-forcing combiner becomes ${\bf w}_{\rm zf} = {{\bf h}_{\rm b}}/{\|{\bf h}_{\rm b}\|^2}$. 
	Then, the achievable rate \eqref{eq:rate} is given as
	
	\vspace{-1 em}
	\small
	\begin{align}
		\nonumber
		&\mathcal{R}({\bf h}_{\rm b}) 
		= \log_2 \left(1+\frac{\alpha \rho}{\beta\frac{{\bf h}_{\rm b}^H}{\|{\bf h}_{\rm b}\|^2} {\rm diag}\Big(\rho{\bf h}_{\rm b} {\bf h}_{\rm b}^H + \frac{1}{\beta}{\bf I}\Big) \frac{{\bf h}_{\rm b}}{\|{\bf h}_{\rm b}\|^2}}\right) 
		\\ \label{eq:rate1}
		&= \log_2 \Bigg(1+\frac{\alpha \rho \|{\bf h}_{\rm b}\|^4}{{\rho(1-\alpha)} \sum_{i \in \mathcal{L}}{|h_{{\rm b},i}|^4}+ {\|{\bf h}_{\rm b}\|^2}}\Bigg) 
	\end{align}
	\normalsize
	where $\mathcal{L}$ is the index set of non-zero complex gains of ${\bf h}_{\rm b}$ with $|\mathcal{L}| = L$, and ${h_{{\rm b},i}}$ is the $i$th element of ${\bf h}_{\rm b}$.
	Since $\|{\bf h}_{\rm b}\| = \sqrt{\gamma}$, maximizing $\mathcal{R}({\bf h}_{\rm b})$ in \eqref{eq:rate1} is equivalent to
	\begin{align}
		\label{eq:rate1_pf1}
		\min_{{\bf h}_{\rm b}}  \sum_{i \in \mathcal{L}}{|h_{{\rm b},i}|^4}	\quad \text{s.t. } \|{\bf h}_{\rm b}\|^2 = \gamma.
	\end{align}
	This can be solved by using Karush-Kuhn-Tucker conditions.
	Let $x_i = |h_{{\rm b},i}|^2$ for $i = 1,\dots,N_r$. 
	The Lagrangian of \eqref{eq:rate1_pf1} is
	\begin{align}
		\label{eq:rate1_pf2}
		\mathfrak{L}({\bf x}, \mu) = \|{\bf x}\|^2 + \mu \bigg(\sum_{i\in \mathcal{L}} x_i - \gamma \bigg).
	\end{align}
	By taking a derivative of \eqref{eq:rate1_pf2} with respect to $x_i$ for $i \in \mathcal{L}$ and setting it to zero, we have	$x_i = -{\mu}/{2}$.
	Putting this to the constraint $\|{\bf h}_{\rm b}\|^2 = \gamma$, the Lagrangian multiplier $\mu$ becomes $\mu = -2\gamma/L$, which leads to 
	\begin{align}
		\label{eq:rate1_pf3}
		x_i = {\gamma}/{L}, \quad i \in \mathcal{L}.
	\end{align}
	Since $x_i = |h_{{\rm b},i}|^2$ denotes the power of the beamspace complex gains, and $ L = |\mathcal{L}|$ represents the number of propagation paths, the result \eqref{eq:rate1_pf3} indicates that the beamspace channel maximizes the achievable rate in \eqref{eq:rate1} when the channel power $\|{\bf h}_{\rm b}\|^2 = \gamma$ is evenly spread to the $L$ beamspace complex gains for the single user case. 
	
	Accordingly, with $|h_{{\rm b},i}^\star|^2 = \gamma/L$ for $i \in \mathcal{L}$, the achievable rate in \eqref{eq:rate1} becomes 
	\begin{align}
		\label{eq:rate1_pf4}
		\mathcal{R}({\bf h}_{\rm b}^\star) 
		= \log_2 \left(1+\frac{\alpha \rho}{{\rho(1-\alpha)}/{L} + {1}/{\gamma}}\right).
	\end{align}
	Note that $\mathcal{R}({\bf h}_{\rm b}^\star)$ in \eqref{eq:rate1_pf4} increases as $L$ increases.
	Thus, a selected beamspace channel ${\bf h}_{\rm b}$ needs to have the largest number of propagation paths with equal power spread across the beamspace complex gains to maximize the rate.
	\end{proof}
	
Proposition \ref{pr:rate1} shows that the achievable rate is related not only to the channel magnitude $\|{\bf h}_{\rm b}\|$ but also to the channel structure in the beamspace for single user selection.
This reveals the difference from the conventional single user selection that only requires the largest channel magnitude to achieve a maximum rate. 
We further show that the maximum achievable rate for the single user case converges to a finite rate even with the infinite channel magnitude due to quantization errors if the number of propagation paths $L$ is limited.

\begin{corollary}
	If the number of channel propagation paths $L$ is finite, the maximum achievable rate with single user selection is limited to be finite with $\|{\bf h}_{{\rm b}}\| \to \infty$ and it converges to
	\begin{align}
		\label{eq:rate1_inf}
		\mathcal{R}({\bf h}_{\rm b}^\star) 
		\to \log_2 \left(1+{\alpha L}/{(1-\alpha)}\right).
	\end{align}
\end{corollary}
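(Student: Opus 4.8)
The plan is to read off the result directly from the closed form established inside the proof of Proposition~\ref{pr:rate1}. By that proposition, among all single-user beamspace channels ${\bf h}_{\rm b}$ with $L$ nonzero gains and fixed norm $\|{\bf h}_{\rm b}\|^2=\gamma$, the maximum achievable rate is attained by the equal-power allocation $|h_{{\rm b},i}^\star|^2=\gamma/L$, $i\in\mathcal{L}$, and equals
\begin{align}
	\mathcal{R}({\bf h}_{\rm b}^\star)=\log_2\!\left(1+\frac{\alpha\rho}{\rho(1-\alpha)/L+1/\gamma}\right),
\end{align}
which is exactly \eqref{eq:rate1_pf4}. Since $\|{\bf h}_{\rm b}\|\to\infty$ is the same as $\gamma\to\infty$, the proof amounts to taking that limit in this expression.

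Next I would note that $\gamma\mapsto\mathcal{R}({\bf h}_{\rm b}^\star)$ is strictly increasing, because the denominator $\rho(1-\alpha)/L+1/\gamma$ is strictly decreasing in $\gamma$; hence the limit $\gamma\to\infty$ is in fact the supremum of the per-$\gamma$ optimal single-user rate. Letting $1/\gamma\to0$ and cancelling $\rho$ from the numerator and denominator gives
\begin{align}
	\mathcal{R}({\bf h}_{\rm b}^\star)\;\longrightarrow\;\log_2\!\left(1+\frac{\alpha L}{1-\alpha}\right),
\end{align}
i.e., \eqref{eq:rate1_inf}. This limit is finite precisely because $L<\infty$ (and $0<\alpha<1$), which is the assertion of the corollary.

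There is essentially no hard step: the corollary is an immediate asymptotic reading of \eqref{eq:rate1_pf4}. The only point worth stating carefully is that the equal-power allocation from Proposition~\ref{pr:rate1} is optimal for \emph{every} $\gamma>0$ (so that the quantity whose limit we take is genuinely the per-$\gamma$ maximum, not the value at a fixed suboptimal channel); this is immediate since the KKT argument in Proposition~\ref{pr:rate1} is carried out for an arbitrary $\gamma$. I would close with the engineering interpretation: the residual quantization noise in the denominator of \eqref{eq:rate1} scales with the channel power, so as $1/\gamma$ becomes negligible the effective SNR saturates at $\alpha L/(1-\alpha)$, capping the rate whenever the channel has only finitely many resolvable paths.
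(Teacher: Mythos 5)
Your proposal is correct and follows essentially the same route as the paper: the paper's proof also just takes the limit $\gamma\to\infty$ in the closed-form maximum rate \eqref{eq:rate1_pf4} from Proposition~\ref{pr:rate1} to obtain \eqref{eq:rate1_inf}. Your added remarks on monotonicity in $\gamma$ and the per-$\gamma$ optimality of the equal-power allocation are fine refinements but not a different argument.
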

\begin{proof}
	By increasing the beamspace channel magnitude to infinity ($\gamma \to \infty$), the maximum achievable rate of the single user selection \eqref{eq:rate1_pf4} converges to \eqref{eq:rate1_inf}.
\end{proof}
Unlike the conventional single user systems in which the achievable rate increases to infinity as channel gain increases, the quantization error ($\alpha < 1$) limits the maximum rate to remain finite.  

To extend the derived result for the single user selection to the multi-user selection, we consider the case in which $L N_s \leq N_r$.
Indeed, this assumption is relevant to mmWave channels where the number of channel propagation paths is considered to be small compared to the number of antennas $L \ll N_r$.
Proposition \ref{pr:rate_multi} shows the structural criteria of channels to maximize the sum rate for multi-user selection.
\begin{proposition}
	\label{pr:rate_multi}
	For the case where $L N_s\leq N_r$ and $\|{\bf h}_{{\rm b},k}\| = \sqrt{\gamma} $, $\forall k$, selecting a set of users $\mathcal{S}$ that satisfies the following channel characteristics maximizes the uplink sum rate.
	\begin{enumerate}[(i)]
		\item Unique AoAs at the receiver for the channel propagation paths of each scheduled user.
		\begin{align}
			\label{eq:unique AoAs}
			\mathcal{L}_{\mathcal{S}(k)} \cap \mathcal{L}_{\mathcal{S}(k')} = \phi \text{ if } k \neq k'.
		\end{align} 
		\item Equal power spread across the beamspace complex gains within each user channel.
		\begin{align}
			\label{eq:Equal spread}
			|h_{{\rm b},i,\mathcal{S}(k)}| = \sqrt{\gamma/L} \text{ for } i \in \mathcal{L}_{\mathcal{S}(k)}
		\end{align} 
	\end{enumerate}
\end{proposition}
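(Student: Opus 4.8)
The plan is to prove a per-user upper bound $\mathcal{R}_k({\bf H}_{\rm b}) \le \mathcal{R}({\bf h}_{\rm b}^\star)$ valid for every feasible ${\bf H}_{\rm b}$ in \eqref{eq:opt_prob}, where $\mathcal{R}({\bf h}_{\rm b}^\star)$ is the single-user optimum already computed in \eqref{eq:rate1_pf4}, and then to verify that any channel matrix satisfying \eqref{eq:unique AoAs} and \eqref{eq:Equal spread} attains this bound with equality for each $k$. Summing over the $N_s$ scheduled users then gives $\mathcal{R}({\bf H}_{\rm b}^\star) = N_s\,\mathcal{R}({\bf h}_{\rm b}^\star)$ and simultaneously pins down the maximizing channel structure, which is exactly the assertion of the proposition.

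For the upper bound I would begin with the zero-forcing identity ${\bf W}_{\rm zf}^H {\bf H}_{\rm b} = {\bf I}$, whose $(k,k)$ entry gives ${\bf w}_{{\rm zf},k}^H {\bf h}_{{\rm b},k} = 1$. Three elementary estimates then bound the denominator of the SINR in \eqref{eq:rate} from below: (a) by Cauchy--Schwarz with $\|{\bf h}_{{\rm b},k}\|^2 = \gamma$, $\|{\bf w}_{{\rm zf},k}\|^2 \ge 1/\gamma$; (b) since ${\bf h}_{{\rm b},k}$ has only $L$ nonzero entries in the virtual channel representation, Cauchy--Schwarz over those $L$ indices gives $\sum_i |w_{{\rm zf},k,i}|^2 |h_{{\rm b},i,k}|^2 \ge 1/L$; and (c) in ${\bf w}_{{\rm zf},k}^H {\bf R}_{{\bf qq}}{\bf w}_{{\rm zf},k} = \alpha\beta\sum_i |w_{{\rm zf},k,i}|^2(\rho\|[{\bf H}_{\rm b}]_{i,:}\|^2 + 1)$ the aggregated per-RF-chain power dominates user $k$'s own contribution, $\|[{\bf H}_{\rm b}]_{i,:}\|^2 \ge |h_{{\rm b},i,k}|^2$. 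Combining (a)--(c) together with $\alpha+\beta=1$ collapses that denominator to at least $\alpha\beta\rho/L + \alpha/\gamma$, which is precisely the single-user denominator, so $\mathcal{R}_k({\bf H}_{\rm b}) \le \mathcal{R}({\bf h}_{\rm b}^\star)$.

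For tightness I would check that disjoint supports \eqref{eq:unique AoAs} make the beamspace channels mutually orthogonal, hence ${\bf H}_{\rm b}^H {\bf H}_{\rm b} = \gamma{\bf I}$ and ${\bf w}_{{\rm zf},k} = {\bf h}_{{\rm b},k}/\gamma$, so zero-forcing reduces to matched filtering and \eqref{eq:rate} reduces to the single-user rate \eqref{eq:rate1}; moreover each row of ${\bf H}_{\rm b}$ then carries at most one nonzero entry, so (c) holds with equality. Imposing the equal-power condition \eqref{eq:Equal spread} makes (a) and (b) tight as well, and a direct substitution reproduces \eqref{eq:rate1_pf4} for every $k$. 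The hypothesis $L N_s \le N_r$ is used exactly at this point, to guarantee that $N_s$ pairwise-disjoint index sets of size $L$ fit inside $\{1,\dots,N_r\}$.

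The step I expect to be the crux is the quantization term: the users are coupled both through the matrix inversion in ${\bf W}_{\rm zf}$ and through the shared per-RF-chain powers $\|[{\bf H}_{\rm b}]_{i,:}\|^2$ inside ${\bf R}_{{\bf qq}}$, and the essential observation is that both couplings can only inflate the denominator, which is what lets the coupled $N_s$-user objective be bounded term by term by $N_s$ copies of the single-user expression already handled in Proposition \ref{pr:rate1}. A minor loose end worth a sentence: if some ${\bf h}_{{\rm b},k}$ happens to have fewer than $L$ active paths, bound (b) only improves, so the conclusion is unchanged.
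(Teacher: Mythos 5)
Your proposal is correct, and it reaches the same terminal bound as the paper --- the per-user ceiling $\log_2\bigl(1+\alpha\rho/(\rho(1-\alpha)/L+1/\gamma)\bigr)$ of \eqref{eq:rate1_pf4}/\eqref{eq:rate_multi_pf2} with equality exactly under \eqref{eq:unique AoAs} and \eqref{eq:Equal spread} --- but by a genuinely different logical route. The paper argues in two stages: it first claims that beamspace orthogonality minimizes $\|{\bf w}_{{\rm zf},k}\|^2$ (forcing ${\bf w}_{{\rm zf},k}={\bf h}_{{\rm b},k}/\|{\bf h}_{{\rm b},k}\|^2$), and then, \emph{conditioned on orthogonality}, runs the chain \eqref{eq:rate_multi_pf0}--\eqref{eq:rate_multi_pf2}, where the unique-AoA condition tightens the cross-user quantization term and Proposition \ref{pr:rate1} handles the equal-power step. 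You instead prove the per-user upper bound unconditionally, for every feasible ${\bf H}_{\rm b}$ in \eqref{eq:opt_prob}, from the single ZF identity ${\bf w}_{{\rm zf},k}^H{\bf h}_{{\rm b},k}=1$ plus three elementary estimates: Cauchy--Schwarz giving $\|{\bf w}_{{\rm zf},k}\|^2\ge 1/\gamma$, Cauchy--Schwarz over the $L$-sparse support giving $\sum_i|w_{{\rm zf},k,i}|^2|h_{{\rm b},i,k}|^2\ge 1/L$, and the row-power domination $\|[{\bf H}_{\rm b}]_{i,:}\|^2\ge|h_{{\rm b},i,k}|^2$, combined via $\alpha+\beta=1$. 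This buys something real: the paper's inequality chain only certifies optimality within the orthogonal-channel family (its stage-1 argument addresses the combiner-norm term but not the quantization term for non-orthogonal ${\bf H}_{\rm b}$), whereas your unconditional bound directly certifies that conditions (i)--(ii) give a \emph{global} maximizer of \eqref{eq:opt_prob}, with $L N_s\le N_r$ used exactly where it should be, to make the disjoint supports realizable. Your tightness verification (disjoint supports $\Rightarrow{\bf H}_{\rm b}^H{\bf H}_{\rm b}=\gamma{\bf I}\Rightarrow{\bf w}_{{\rm zf},k}={\bf h}_{{\rm b},k}/\gamma$, each relevant row carrying a single nonzero entry, and equal power making both Cauchy--Schwarz steps tight) is sound, so the argument is complete; the paper's version is shorter but leans on the less formal two-stage reduction.
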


\begin{proof}
	The achievable rate in \eqref{eq:rate} can be expressed as
	
	\vspace{-1em}
	\small
	\begin{align}
		\label{eq:rate_multi_pf}
		\mathcal{R}_k({\bf H}_{\rm b}) 
		= \log_2 \left(1+\frac{\alpha \rho}{\rho\beta{\bf w}_{{\rm zf},k}^H {\rm diag}\Big({\bf H}_{\rm b} {\bf H}_{\rm b}^H\Big){\bf w}_{{\rm zf},k} + \|{\bf w}_{{\rm zf},k}\|^2}\right). 
	\end{align}
	\normalsize
	To prove Proposition \ref{pr:rate_multi}, we take a two-stage maximization approach and show that the conditions for the maximization are sufficient to solve \eqref{eq:opt_prob} when $LN_s \leq N_r$. 
	In the first stage, we focus on minimizing $\|{\bf w}_{{\rm zf},k}\|^2$ in \eqref{eq:rate_multi_pf}. 
	The zero-forcing combiner ${\bf w}_{{\rm zf},k}$ which satisfies ${\bf w}_{{\rm zf},k}^H{\bf h}_{{\rm b},k} = 1$ with minimum norm is ${\bf w}_{{\rm zf},k} = {\bf h}_{{\rm b},k}/\|{\bf h}_{{\rm b},k}\|^2$.
	When user channels are orthogonal, ${\bf h}_{{\rm b},k} \! \perp\! {\bf h}_{{\rm b},k'}$ for $ k \neq k'$, the zero-forcing combiner ${\bf w}_{{\rm zf},k} = {\bf h}_{{\rm b},k}/\|{\bf h}_{{\rm b},k}\|^2$ further satisfies ${\bf w}_{{\rm zf},k}^H{\bf h}_{{\rm b},k'}= 0$ for $k \neq k'$.   
	Therefore, with the orthogonality condition,  ${\bf w}_{{\rm zf},k} = {\bf h}_{{\rm b},k}/\|{\bf h}_{{\rm b},k}\|^2$ minimizes $\|{\bf w}_{{\rm zf},k}\|^2$.
	
	In the second stage, we minimize \eqref{eq:rate_multi_pf} given the orthogonality condition from the first stage as follows:
	
	\vspace{-1em}
	\small
	\begin{align}
		\nonumber
		&\mathcal{R}_k({\bf H}_{\rm b}|{\bf h}_{{\rm b},k}\! \perp \! {\bf h}_{{\rm b},k'}, k \neq k') 
		\\ \label{eq:rate_multi_pf0} 
		& \stackrel{(a)} = \log_2 \Biggl(1+\frac{\alpha \rho {\|{\bf h}_{{\rm b},k}\|^4} }{\rho\beta {\bf h}^H_{{\rm b},k}{\rm diag}\Big({\bf H}_{\rm b} {\bf H}_{\rm b}^H\Big) {\bf h}_{{\rm b},k}+ \|{\bf h}_{{\rm b},k}\|^2}\Biggr) 
		\\ \nonumber
		& = \log_2 \left (1+\frac{\alpha \rho \gamma^2}{\rho \beta  \underset {i\in \mathcal{L}_{k}}{\sum} |h_{{\rm b},i,k}|^2\bigg(|h_{{\rm b},i,k}|^2 + \underset{u\neq k}{\sum} |h_{{\rm b},i,u}|^2 \bigg ) + \gamma}\right)
		\\	\label{eq:rate_multi_pf1}
		&\stackrel{(b)} \leq \log_2 \left(1+\frac{\alpha \rho \gamma^2}{\rho \beta  \sum_{i\in \mathcal{L}_{k}}|h_{{\rm b},i,k}|^4 + \gamma}\right).
		\\	\label{eq:rate_multi_pf2}
		&\stackrel{(c)} \leq \log_2 \Bigg(1+\frac{\alpha \rho}{ \rho \beta /L + 1/\gamma}\Bigg).
	\end{align}
	\normalsize
	The equality in (a) comes from the channel orthogonality condition $({\bf h}_{{\rm b},k}\! \perp \!{\bf h}_{{\rm b},k'})$ which leads to ${\bf w}_{{\rm zf},k} = {{\bf h}_{{\rm b},k}}/{\|{\bf h}_{{\rm b},k}\|^2}$. 
	The equality in (b) holds if and only if $|h_{{\rm b},i,u}| = 0$, $\forall \, u \neq k$ and $i \in \mathcal{L}_k$; i.e, each user has channel propagation paths with unique AoAs.
	Note that \eqref{eq:rate_multi_pf1} is equivalent to the rate of single user selection in \eqref{eq:rate1} due to the channel orthogonality and unique AoA condition.
	Consequently, inequality (c) comes from the fact that \eqref{eq:rate_multi_pf1} is maximized when $|h_{{\rm b},i,k}| = \sqrt{\gamma/L}$ for $i \in \mathcal{L}_k$ as shown in Proposition \ref{pr:rate1}. 
	The upper bound in \eqref{eq:rate_multi_pf2} is equivalent to the maximum achievable rate for the single user case in \eqref{eq:rate1_pf4}; i.e., \eqref{eq:rate_multi_pf2} is the maximum achievable rate of each user for the multi-user case in the considered system, which also maximizes the sum rate in \eqref{eq:opt_prob}.
	
	Accordingly, the orthogonality condition, the unique AoA condition, and the equal power spread condition are sufficient to maximize the sum rate.
	Indeed, the unique AoA condition implies the orthogonality condition from the first stage.
	Therefore, the beamspace channel matrix ${\bf H}_{\rm b}$ only needs to satisfy the unique AoA and equal power spread conditions to maximize the uplink sum rate.
	This completes the proof.
\end{proof}

Intuitively, the structural criteria in Proposition \ref{pr:rate_multi} imply that we need to select users by minimizing the quantization noise to maximize the sum rate.
Conventional user scheduling approaches ignore such structural criteria of the channel in the beamspace as perfect quantization is considered, and the channel orthogonality and magnitude are mainly considered in the conventional user scheduling methods, which is not suitable for the systems with low-resolution ADCs.


\section{Proposed Algorithm}
\label{sec:algorithm}

\begin{algorithm}[!t]
\caption{Channel Structure-based Scheduling (CSS)}
\label{algo:CSS}
\vspace{.3 em}
\begin{enumerate}
	\item BS initializes $\mathcal{U}_1 = \{1,\dots, N_u\}$, $\mathcal{S} = \phi$, and $i = 1$.
	\item BS stores $N_b \geq L$ dominant beam indices of each user channel ${\bf h}_{{\rm b}, k}$ in $\mathcal{B}_k$, $\forall k$.
	\item Using SINR in \eqref{eq:sinr}, BS selects $i$th user as
	\begin{align}
		\nonumber
		\mathcal{S}(i) = \argmax_{k \in \mathcal{U}_i} {\rm SINR}_k\big([{\bf H}_{\rm b}(\mathcal{ S}), {\bf h}_{{\rm b},k}]\big)
	\end{align}
	and updates $\mathcal{U}_{i} = \mathcal{U}_i\setminus\{\mathcal{S}(i) \}$ and $\mathcal{S}= \mathcal{S} \cup \{\mathcal{S}(i) \}$.
	\item To calculate component of ${\bf h}_{{\rm b}, \mathcal{S}(i)}$ that is orthogonal to subspace ${\rm span}\{{\bf f}_{\mathcal{S}(1)},\dots,{\bf f}_{\mathcal{S}(i-1)}\}$, computes
	\begin{align}
		\nonumber
		{\bf f}_{\mathcal{S}(i) } &= {\bf h}_{{\rm b},\mathcal{S}(i) } - \sum_{j = 1}^{i-1}  \frac{{\bf f}_{\mathcal{S}(j)}^H{\bf h}_{{\rm b},\mathcal{S}(i) }}{\|{\bf f}_{\mathcal{S}(j)}\|^2}{\bf f}_{\mathcal{S}(j)}\\ \nonumber
		&=\bigg({\bf I} - \sum_{j = 1}^{i-1} \frac{{\bf f}_{\mathcal{S}(j)}{\bf f}_{\mathcal{S}(j)}^H}{\|{\bf f}_{\mathcal{S}(j)}\|^2} \bigg){\bf h}_{{\rm b},\mathcal{S}(i)}
	\end{align}
	\item To impose orthogonality among selected users by using results from step 2 and 4, BS further updates
	\begin{align}
		\nonumber
		\mathcal{U}_{i+1} = \bigg\{k\in \mathcal{U}_i\ |&\ \frac{|{\bf f}_{\mathcal{S}(i)}^H{\bf h}_{{\rm b},k}|}{\|{\bf f}_{\mathcal{S}(i)}\| \|{\bf h}_{{\rm b},k}\|}< \epsilon,\\ \label{eq:semi-orthogonal}
		 &\quad \quad \quad  |\mathcal{B}_{\mathcal{S}(i)} \cap \mathcal{B}_k | \leq N_{\rm OL}  \bigg\}
	\end{align}
	\item If $i \leq N_s$ and $\mathcal{U}_{i+1} \neq \phi $, update selection stage $i = i + 1$ and go to step 3. 
	Otherwise, algorithm finishes.
\end{enumerate}
\end{algorithm}

In this section, we propose a user scheduling algorithm with low complexity by using the derived structural criteria with respect to the beamspace channel in Proposition \ref{pr:rate_multi}.
Leveraging the channel orthogonality condition, the proposed algorithm first filters the user set with a semi-orthogonality condition as in \cite{yoo2006optimality}, which leaves only the users whose beamspace channels closely satisfy the equality in \eqref{eq:rate_multi_pf0}. 
Exploiting the unique AoA condition \eqref{eq:unique AoAs}, the algorithm, then, enforces an additional spatial orthogonality in the beamspace as in \cite{lee2016performance} to the filtered set.
These filtering steps reduce the size of the user set, offering semi-orthogonality between the selected users and remaining users.
This semi-orthogonality leads to ${\bf w}_{{\rm zf},k} \approx {\bf h}_{{\rm b},k}/\|{\bf h}_{{\rm b},k}\|^2$ when selecting a user.

Using ${\bf w}_{{\rm zf},k} \approx {\bf h}_{{\rm b},k}/\|{\bf h}_{{\rm b},k}\|^2$, the signal-to-interference-plus-noise ratio (SINR) of user $k$ in the candidate set can be approximated as
\begin{align}
	\label{eq:sinr}
	{\rm SINR}_k({\bf H}_{\rm b}) \approx \frac{\alpha \rho \|{\bf h}_{{\rm b},k}\|^4}{(1-\alpha){\bf h}_{{\rm b},k}^H\, {\bf D}({\bf H}_{\rm b})\, {\bf h}_{{\rm b},k}}
\end{align}
where ${\bf D}({\bf H}_{\rm b}) = {\rm diag}(\rho{\bf H}_{\rm b} {\bf H}_{\rm b}^H + \frac{1}{1-\alpha}{\bf I}_{N_{\rm RF}})$.
As a selection measure, the algorithm adopts the approximated SINR \eqref{eq:sinr} to incorporate the structural criteria of channels in Proposition \ref{pr:rate_multi} with the channel magnitude and orthogonality.
Using the approximated SINR \eqref{eq:sinr} greatly reduces complexity by avoiding the matrix inversion for computing zero-forcing combiners ${\bf W}_{\rm zf}$.
The proposed algorithm, which we call channel structure-based scheduling (CSS), is described in Algorithm \ref{algo:CSS}.
Note that due to the quantized beamforming angle, there exist beamforming offsets that result in more than $L$ dominant beams in the channels ${\bf h}_{{\rm b},k}$.  
Consequently, the algorithm stores $N_b \geq L$ dominant beam indices as in step 2 in Algorithm \ref{algo:CSS}.

\begin{algorithm}[!t]
\caption{Greedy User Scheduling}
\label{algo:greedy}
\vspace{.3 em}
\begin{enumerate}
	\item BS initializes $\mathcal{T}_1 = \{1,\dots,N_u\}$, $\mathcal{S}_G = \phi$, and $i = 1$.
	\item BS selects a user who maximizes sum rate as
	\begin{align}
		\mathcal{S}_G(i) = \argmax_{k \in \mathcal{T}_i} \sum_{j \in {\mathcal{S}_G \cup \{k\}}} \mathcal{R}_j\big([{\bf H}_{\rm b}(\mathcal{S}_G),{\bf h}_{{\rm b},k}]\big)
	\end{align}
	where $\mathcal{R}_j$ is given in \eqref{eq:rate}.
	\item Update $\mathcal{T}_{i+1} =\mathcal{T}_i\setminus \{\mathcal{S}_G(i)\}$, $\mathcal{S}_G = \mathcal{S}_G \cup \{\mathcal{S}_G(i) \}$, and $i = i + 1$, and go to step 2 until select $N_s$ users.
\end{enumerate}
\end{algorithm}

To provide a reference for sum rate performance, we propose a greedy algorithm that schedules the user who achieves the highest sum rate in each iteration.
The greedy scheduling algorithm, given as Algorithm \ref{algo:greedy}, offers sub-optimal performance with high complexity.
In each iteration, the greedy approach uses the achievable rate in \eqref{eq:rate} as a selection measure, and thus, computes the exact SINR with zero-forcing combiners for each scheduled user.
Accordingly, the greedy method carries the burden of computing a matrix inversion in each iteration.
The method computes the achievable rate $|\mathcal{T}_i| \times i$ times and compares the resulting $|\mathcal{T}_i|$ sum rates for the $i$th selection, whereas the CSS method only computes the approximated SINR $|\mathcal{U}_i|$ times and compares $|\mathcal{U}_i|$ SINRs.
Moreover, unlike the greedy algorithm, the CSS algorithm reduces the size of the remaining user candidate set $\mathcal{U}_i$ after each selection by enforcing the semi-orthogonality conditions in \eqref{eq:semi-orthogonal}. 
This size reduction leads to $|\mathcal{U}_i| < |\mathcal{T}_i|$, which makes the CSS algorithm more computationally efficient than the greedy algorithm in two cases: the total number of candidate users $N_u$ becomes larger, and more users need to be selected.

\section{Simulation Results}
\label{sec:simulation}

\begin{figure}[!t]\centering
\includegraphics[scale = 0.15]{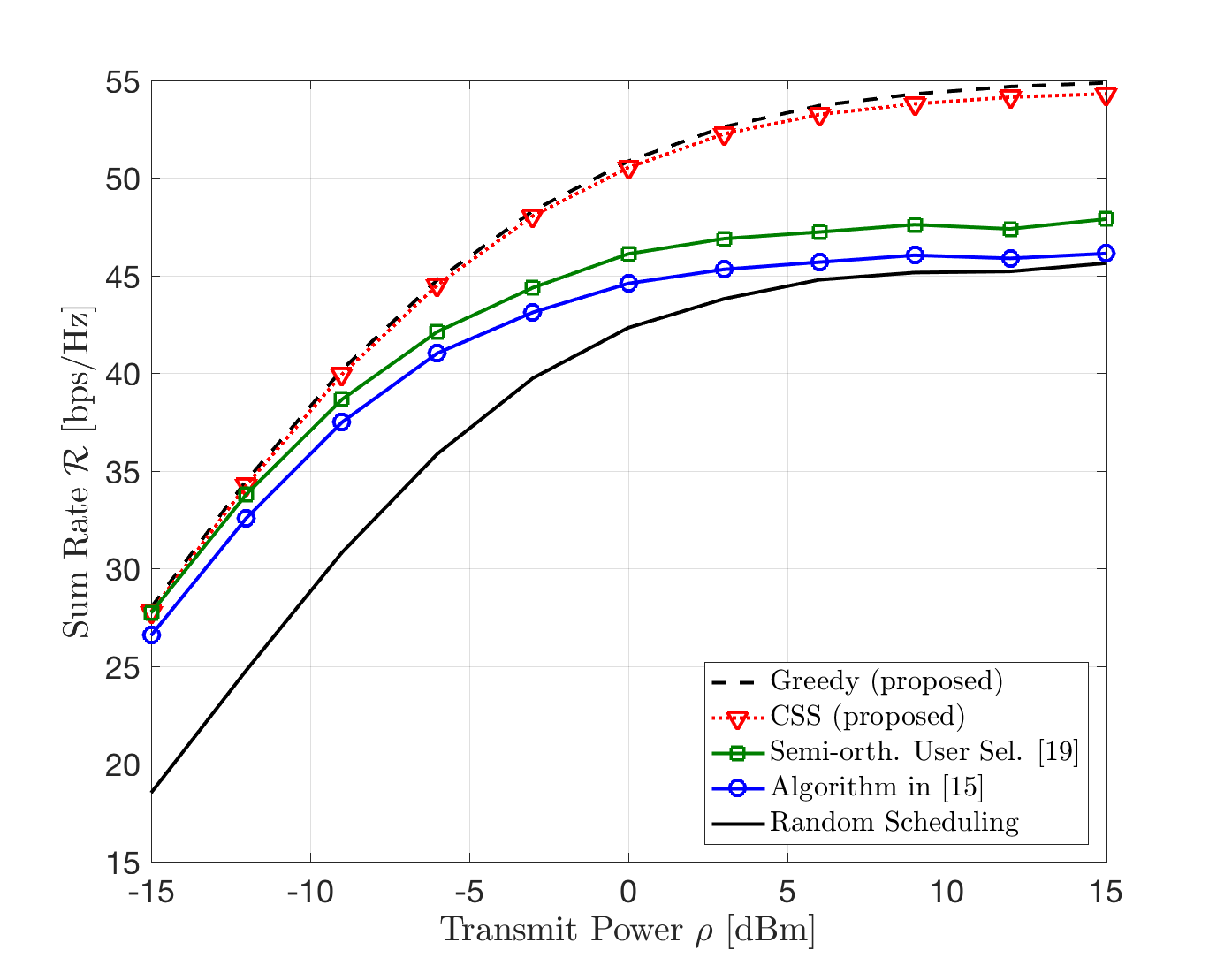}
\caption{The uplink sum rate with respect to the transmit power $\rho$ with $N_r = 128$, $N_u = 200$, $N_s = 10$, $L = 4$, $N_b = 8$, $b = 2$, and $N_{\rm OL} = 3$.} 
\label{fig:rate_power}
\vspace{-1em}
\end{figure}

\begin{figure}[!t]\centering
\includegraphics[scale = 0.15]{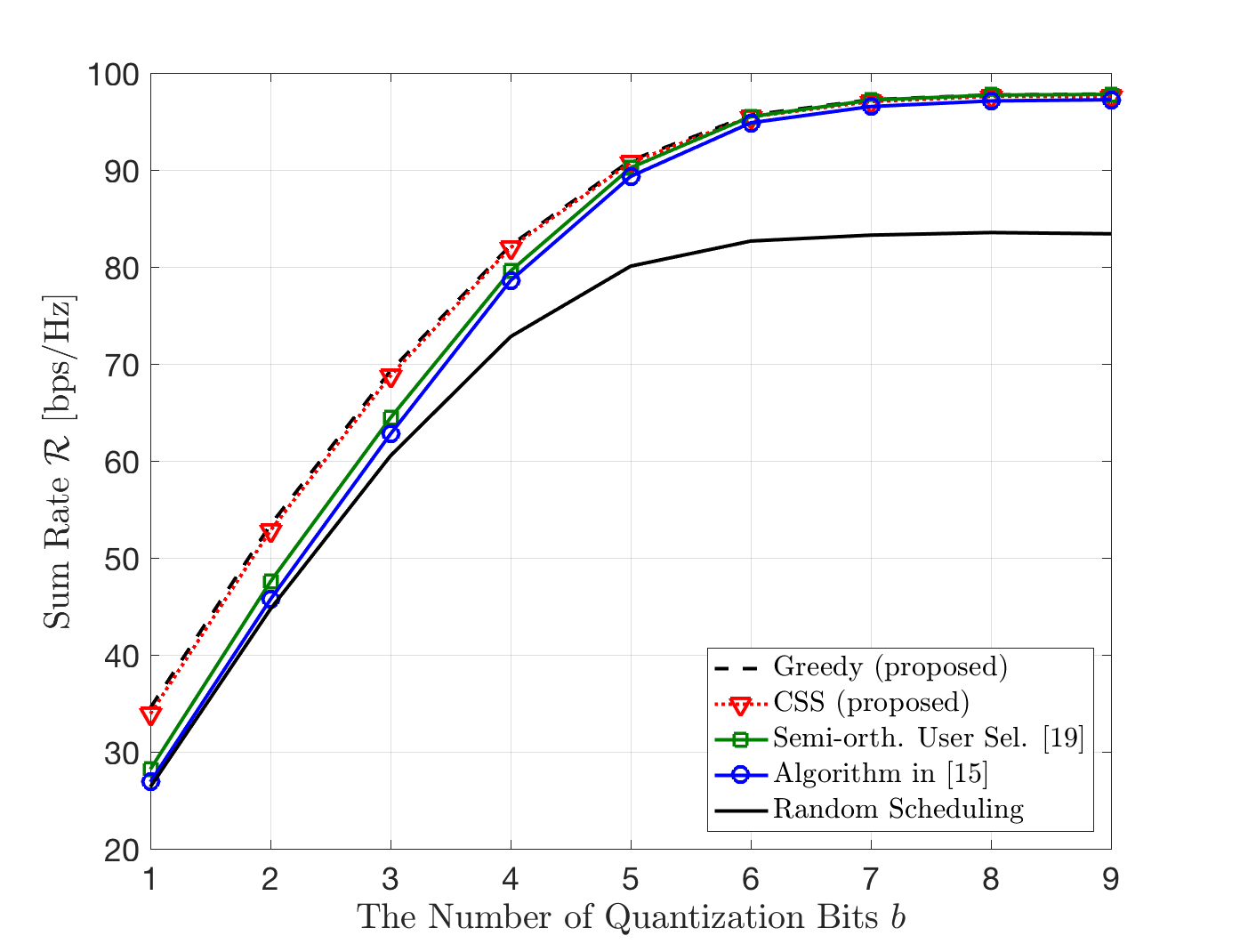}
\caption{The uplink sum rate with respect to the number of quantization bits $b$ with $N_r = 128$, $N_u = 200$, $N_s = 10$, $L = 4$, $N_b = 8$, and $\rho = 5$ dB.} 
\label{fig:rate_bit}
\vspace{-1em}
\end{figure}

In this section, we numerically evaluate the proposed CSS algorithm. 
The Matlab code is available at \cite{choi2017usersoftware}.
We consider $N_r = 128$ BS antennas, $N_u = 200$ users, $L = 4$ channel propagation paths for each user, and $N_b = 2\times L$ beam indices to store at the BS.
We assume that the BS schedules $N_s = 10$ users to serve at each transmission.
In addition to the proposed CSS and greedy algorithms, we simulate the algorithm in \cite{lee2016performance} and semi-orthogonal user scheduling (SUS) algorithm in \cite{yoo2006optimality}.
To provide a reference for a performance lower bound, a random scheduling case is also included.
In simulations, the channel model in \eqref{eq:channel_geo} is used 
without imposing the constraint of $\|{\bf h}_{{\rm b},k}\| = \sqrt{\gamma}$, $\forall k$.
Orthogonality parameters such as $\epsilon$ and $N_{\rm OL}$ are optimally chosen unless mentioned otherwise.

Fig. \ref{fig:rate_power} shows the uplink sum rate with respect to the transmit power $\rho$.
We consider the number of quantization bits $b$ and the number of beam-overlap constraint $N_{\rm OL}$ to be $b =2$ and $N_{\rm OL}= 3$, respectively.
Notably, the CSS algorithm almost achieves the sum rate of the greedy algorithm with lower complexity. 
The gap between the CSS and the SUS algorithms increases as the transmit power increases because the quantization noise power becomes dominant compared to the additive white Gaussian noise power in the high SNR regime.
Accordingly, the gain from considering the structural criteria becomes larger as the SNR increases.
In the high SNR regime, the proposed CSS algorithm provides $22 \%$ sum rate increase compared to the random scheduling, whereas the algorithm in \cite{lee2016performance} and the SUS algorithm show marginal sum rate increase.

In Fig. \ref{fig:rate_bit}, we consider $\rho = 5$ dBm and increase $N_{\rm OL}$ from $2$ to $5$ as the number of quantization bits increases since the unique AoA condition becomes less critical as the quantization resolution increases;
$N_{\rm OL} =  2, 3, 4,$ and $5$ for $b \in \{1,2,3\}$, $b \in \{4, 5\}$, $b \in \{6, 7\}$, and $b \in \{8, 9\}$, respectively.
The CSS algorithm also attains the sum rate of the greedy algorithm with lower complexity and outperforms the algorithm in \cite{lee2016performance} and SUS algorithm in the low-resolution regime. 
Note that the sum rate of the algorithm in \cite{lee2016performance} and the SUS algorithm converges to that of the CSS and greedy algorithm as the number of quantization bits increases.
This convergence corresponds to the findings that the structural criteria with respect to  the channel in the beamspace, in addition to the channel magnitude and orthogonality, needs to be considered in user scheduling under coarse quantization.
Accordingly, in the low-resolution regime, there exists a noticeable sum rate gap between the proposed algorithms and the other algorithms.

\begin{figure}[!t]\centering
\includegraphics[scale = 0.15]{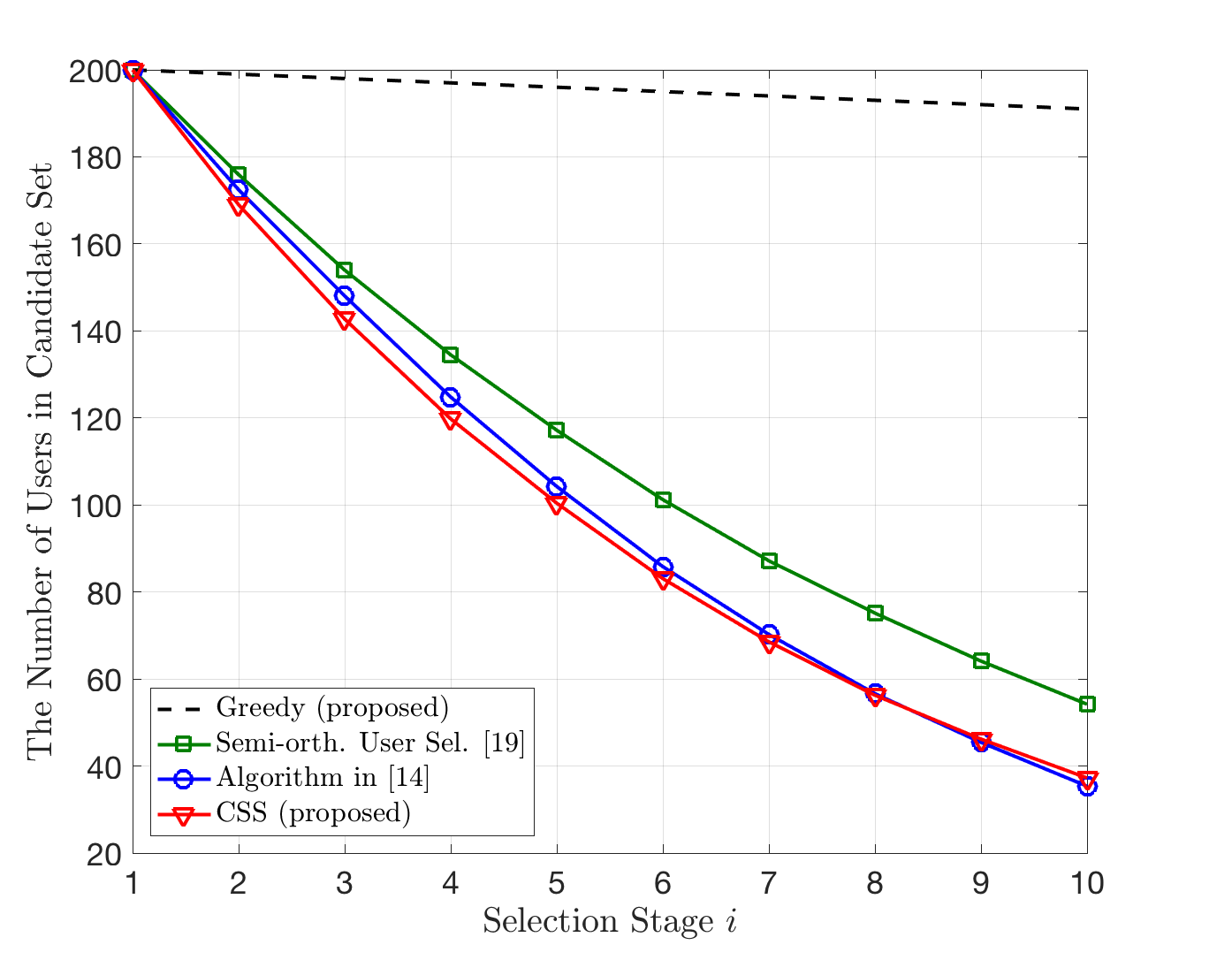}
\caption{The number of users in the candidate set at each selection stage $i$ for $\rho = 6$ dBm transmit power with $N_r = 128$, $N_u = 200$, $N_s = 10$, $L = 4$, $N_b = 8$, $b = 2$, and $N_{\rm OL} = 3$.} 
\label{fig:setsize}
\vspace{-1em}
\end{figure}

The number of remaining users in the candidate set at each selection stage $i$ is shown in Fig. \ref{fig:setsize}.
We consider the case of $\rho = 6$ dBm with $b = 2$ and $N_{\rm OL} = 3$, which corresponds to the $\rho = 6$ dBm point in Fig. \ref{fig:rate_power}.
The algorithms other than greedy scheduling show large decrease in the candidate set size as the algorithms adopt orthogonality conditions to filter the user.
In particular, the proposed CSS algorithm provides the largest decrease in the set size which is similar to the algorithm in \cite{lee2016performance} while achieving the sum rate comparable to that of the greedy scheduling.
Consequently, the simulation results reveal the efficiency of using the proposed CSS algorithm which accomplishes high performance with low complexity regarding the number of users in the candidate set.

\section{Conclusion}
\label{sec:conclusion}

In this paper, we investigated user scheduling for mmWave MIMO systems with low-resolution ADCs.
We discovered that in addition to channel magnitude and beamspace orthogonality, the structural characteristic of channels in the beamspace is indispensable in user scheduling with low-resolution ADCs due to the quantization error.
To maximize the achievable rate with respect to the channel structure, the channels of the scheduled users need to have $(1)$ as many propagation paths as possible with unique AoAs to give spatial orthogonality in the beamspace, and $(2)$ even power distribution in the beamspace to reduce the quantization error.
Leveraging these structural criteria, we proposed the scheduling algorithm with low complexity. 
The simulation results demonstrated that the proposed algorithm outperforms the conventional scheduling algorithms, providing a larger increase of the uplink achievable rate for mmWave systems with low-resolution ADCs.
{\color{black}Future work could consider imperfect channel information.}

\bibliographystyle{IEEEtran}
\bibliography{icc18.bib}
\end{document}